\newcommand*\patchAmsMathEnvironmentForLineno[1]{%
      \expandafter\let\csname old#1\expandafter\endcsname\csname #1\endcsname
      \expandafter\let\csname oldend#1\expandafter\endcsname\csname end#1\endcsname
      \renewenvironment{#1}%
         {\linenomath\csname old#1\endcsname}%
         {\csname oldend#1\endcsname\endlinenomath}}%
    \newcommand*\patchBothAmsMathEnvironmentsForLineno[1]{%
      \patchAmsMathEnvironmentForLineno{#1}%
      \patchAmsMathEnvironmentForLineno{#1*}}%
\def\dispmuskip{\thinmuskip= 3mu plus 0mu minus 2mu \medmuskip=  4mu plus 2mu minus 2mu \thickmuskip=5mu plus 5mu minus 2mu}
\def\textmuskip{\thinmuskip= 0mu                    \medmuskip=  1mu plus 1mu minus 1mu \thickmuskip=2mu plus 3mu minus 1mu}
\def\beq{\dispmuskip\begin{equation}}    \def\eeq{\end{equation}\textmuskip}
\def\beqn{\dispmuskip\begin{displaymath}}\def\eeqn{\end{displaymath}\textmuskip}
\def\bea{\dispmuskip\begin{eqnarray}}    \def\eea{\end{eqnarray}\textmuskip}
\def\bean{\dispmuskip\begin{eqnarray*}}  \def\eean{\end{eqnarray*}\textmuskip}
\def\paradot#1{\vspace{1.3ex plus 0.7ex minus 0.5ex}\noindent{\bf\boldmath{#1.}}}
\newtheorem{theorem}{Theorem}
\newtheorem{algorithm}{Algorithm}
\newtheorem{proposition}{Proposition}
\newtheorem{remark}{Remark}
\newcommand{\eps}{\epsilon}
\newcommand{\wh}{\widehat}
\newcommand{\wt}{\widetilde}
\def\E{{\mathbb E}}                         
\def\V{{\mathbb V}}
\def\P{{\rm P}}                         
\def\s{\sigma}
\def\t{\theta}
\def\l{\lambda}
\def\N{{\cal N}}
\def\obs{\text{\rm obs}}
\def\ABC{\text{\rm ABC}}
\def\tr{\text{\rm tr}}
\def\IS{\text{\rm IS}}
\def\Sup{\text{\rm Sup}}
\def\Var{\mathbb{V}}
\def\EABCIS{\text{\rm EABC-$\IS^2$ }}
\newcommand{\dd}{\mathrm{d}}
\begin{document}

\title{Exact ABC using Importance Sampling}
\author{\normalsize Minh-Ngoc Tran and Robert Kohn
\footnote{
Minh-Ngoc Tran is with the Business Analytics discipline, University of Sydney Business School, Sydney 2006 Australia
(minh-ngoc.tran@sydney.edu.au). Robert Kohn is with the UNSW Business School,
University of New South Wales, Sydney 2052 Australia (r.kohn@unsw.edu.au).}
}

\date{\today}
\maketitle
\begin{abstract}
Approximate Bayesian Computation (ABC) is a powerful method for carrying out Bayesian inference when the likelihood is computationally intractable. However, a drawback of ABC is that it is an approximate method that induces a systematic error because it is necessary to set a tolerance level to make the
computation tractable. The issue of how to optimally set this tolerance level has been the subject of extensive research. This paper proposes an ABC algorithm based on importance sampling that estimates expectations with respect to the {\em exact } posterior distribution given the observed summary statistics. This overcomes the need to select the tolerance level. By {\em exact} we mean that there is no systematic error and the Monte Carlo error can be made arbitrarily small by increasing the number of importance samples. We provide a formal justification for the method and study its convergence properties. The method is illustrated in two applications and the empirical results suggest that the proposed ABC based estimators consistently converge to the true values as the number of importance samples increases. Our proposed approach
can be applied more generally to any importance sampling problem where an unbiased estimate of the likelihood is required.

\paradot{Keywords} Approximate Bayesian Computation, Debiasing, Ising model, Marginal likelihood Estimate , Unbiased likelihood Estimate
\end{abstract}

\section{Introduction}
Many  Bayesian inference problems, including the  calculation of posterior moments and probabilities,
require evaluating an integral of the form
\begin{align}\label{eq:integral of interest}
\E(\varphi|y_\obs) = \int_\Theta \varphi(\t)p(\t|y_\obs)\dd\theta.
\end{align}
In \eqref{eq:integral of interest},  $p(\theta|y_\obs)\propto p(\theta)p(y_\obs|\theta)$
is the posterior distribution of $\theta$,
$y_\obs$ is the observed data, $\theta\in\Theta$ is the vector of model parameters,
and  $\varphi(\theta)$ is a function mapping  $\Theta$ to the real line.
In many problems the likelihood $p(y|\theta) $ is intractable,
either because it cannot be computed or because it is too expensive to compute.
The Approximate Bayesian Computation (ABC) approach was proposed
to overcome this problem as it only requires that we are able
to sample from the model density $y \sim p(\cdot|\theta) $ without being able to evaluate it
\citep[see][]{Tavaré01021997,beaumont2002approximate,marjoram2003markov,Sisson:2011}.

ABC approximates the intractable likelihood $p(y_\obs|\theta)$ by
\beq\label{eq:ABC likelihood 1}
p_{\ABC,\eps}(y_\obs|\t):=\int K_\eps(y-y_\obs)p(y|\t)\dd y
\eeq
with $K_\epsilon(u)$ a scaled kernel density with bandwidth $\epsilon>0$.
If  the original dataset $y$ has a complex structure and is high dimensional,
it is  computationally more efficient and convenient to work with a lower-dimensional
summary statistic $s=S(y)\in \mathbb{R}^d$.
That is, instead of \eqref{eq:ABC likelihood 1}, we work with
\beq\label{eq:ABC likelihood 2}
p_{\ABC,\eps}(s_\obs|\t):=\int K_\eps(s-s_\obs)p(s|\theta)\dd s,
\eeq
where $p(s|\theta)$ denotes the density of summary statistic $s$ and $s_\obs=S(y_\obs)$.
Here, $K_\eps(u)=K(u/\eps)/\eps^d$ with $K(\cdot)$ a $d$-variate kernel density such as a Gaussian density.

There are two major approximations used in ABC.
The first is using  a summary statistic instead of the original data
\begin{align}\label{eq: approximation 1}
p(\theta|y_\obs)& \approx p(\theta|s_\obs)\propto p(\theta)p(s_\obs|\theta),
\end{align}
which is exact if the summary statistic $S(\cdot)$ is sufficient.
The second results from approximating the intractable likelihood $p(s_\obs|\t)$ by $p_{\ABC,\eps}(s_\obs|\t)$,
\beq\label{eq: approximation 2}
p(s_\obs|\t)\approx p_{\ABC,\eps}(s_\obs|\theta)=\int K_\eps(s-s_\obs)p(s|\theta)\dd s.
\eeq
This approximation is exact if $\eps=0$, but setting $\eps=0$ is impractical as
the event that $s=s_\obs$ occurs with probability zero for all but the simplest applications.
Selecting $\eps$ is still an open question because it is usually necessary to trade off between computational load
and accuracy when selecting $\eps$.

All current  ABC algorithms suffer from approximation errors due to approximation \eqref{eq: approximation 1},
if $S(\cdot)$ is not sufficient, and approximation \eqref{eq: approximation 2} if $\epsilon>0$.
Our article proposes an ABC algorithm to estimate \eqref{eq:integral of interest}
that completely removes the error
due to approximation \eqref{eq: approximation 2},
i.e. we are able to estimate expectations with respect to the {\it exact} posterior $p(\theta|s_\obs)$ based on the
summary statistic.
In addition, if $S(\cdot)$ is sufficient, then the
estimated expectations are with respect to the exact posterior $p(\theta|y_\obs)$.

The basic idea is to obtain an unbiased estimator of the likelihood,
based on the debiasing approach of \cite{McLeish:2012} and \cite{Rhee:2013}.
We then construct an importance sampling estimator of the integral \eqref{eq:integral of interest}
and establish its convergence properties.
The unbiasedness allows the importance sampling estimator to
converge almost surely to the true value \eqref{eq:integral of interest} without
suffering from the systematic error associated with the use of $\eps>0$.
We illustrate the proposed method by a Gaussian example and an Ising model example.

We note that our approach can be applied more generally to importance sampling problems where an unbiased estimate of the likelihood is required.

\section{Constructing an unbiased estimator using a debiasing approach}
Let $\lambda$ be an unknown constant that we want to estimate
and let $\zeta_k,\ k=0,1,...$  be a sequence of biased estimators of $\lambda$, such that it is possible
to generate $\zeta_k$ for each $k$.
We are interested in constructing an unbiased estimator $\wh\l$ of $\lambda$, i.e. $\E(\wh\l)=\lambda$, based on the $\zeta_k$'s, so
 that $\wh\l$ has a finite variance.
We now present  the debiasing approach, proposed independently by \cite{McLeish:2012} and \cite{Rhee:2013},
for constructing such a $\wh\l$.
The basic idea is to introduce randomization into the sequence $\{\zeta_k,k=0,1,2,...\}$ to eliminate the bias.

\begin{proposition}[Theorem 1 of \cite{Rhee:2013}]\label{the: theorem 1}
Suppose that $T$ is a non-negative integer-valued random variable such that $P(T\geq k)>0$ for any $k=0,1,2,...$,
and that $T$ is independent of the $\zeta_k$'s. Let $\varpi_k:=1/P(T\geq k)$.
If
\beq\label{eq:condition 1}
 \sum_{k=1}^\infty \varpi_k{\E\left ((\zeta_{k-1}-\l)^2\right )}<\infty,
\eeq
then
\beqn
\wh \lambda : = \zeta_0+\sum_{k=1}^T \varpi_k(\zeta_{k}-\zeta_{k-1}),
\eeqn
is an unbiased estimator of $\l$
and has the finite variance
\beq\label{eq:variance}
\V(\wh \lambda )=\sum_{k=1}^\infty \varpi_k\Big(\E((\zeta_{k-1}-\l)^2)-\E((\zeta_{k}-\l)^2)\Big)-\E((\zeta_{0}-\l)^2)<\infty.
\eeq
\end{proposition}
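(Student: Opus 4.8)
\paradot{Proof proposal}
The plan is first to make the random truncation explicit by folding it into a single infinite series,
\[
\wh\l-\l=\sum_{k=0}^\infty \varpi_k\Delta_k\,\I(T\ge k),\qquad \Delta_0:=\zeta_0-\l,\ \ \Delta_k:=\zeta_k-\zeta_{k-1}\ (k\ge1),\ \ \varpi_0:=1,
\]
where $\I(T\ge k)$ retains exactly the terms surviving the truncation at $T$ (recall $T\ge0$ always). Two identities drive the whole argument: the telescoping $\sum_{k=1}^n\Delta_k=\zeta_n-\zeta_0$ and the weight identity $\varpi_kP(T\ge k)=1$. For unbiasedness I would take expectations term by term; since $T$ is independent of the $\zeta_k$'s, $\E[\Delta_k\I(T\ge k)]=\E[\Delta_k]P(T\ge k)$, so $\varpi_k\E[\Delta_k\I(T\ge k)]=\E[\Delta_k]$ and the sum telescopes to $\lim_n\E[\zeta_n]$. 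Finally, the $k$th term of the series in \eqref{eq:condition 1} is $\varpi_k\E[(\zeta_{k-1}-\l)^2]\ge\E[(\zeta_{k-1}-\l)^2]$ because $\varpi_k\ge1$, so convergence of that series forces $\E[(\zeta_{k-1}-\l)^2]\to0$; hence $\zeta_n\to\l$ in $L^2$, $\E[\zeta_n]\to\l$, and $\E[\wh\l]=\l$.

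For the variance I would use unbiasedness to write $\V(\wh\l)=\E[(\wh\l-\l)^2]$ and square the series. Independence gives $\E[\I(T\ge j)\I(T\ge k)]=P(T\ge\max(j,k))$, and combined with $\varpi_{\max(j,k)}P(T\ge\max(j,k))=1$ the weights collapse to
\[
\E[(\wh\l-\l)^2]=\sum_{j,k\ge0}\varpi_{\min(j,k)}\,\E[\Delta_j\Delta_k].
\]
The key manoeuvre is the resummation $\varpi_{\min(j,k)}=\sum_{m\ge0}(\varpi_m-\varpi_{m-1})\I(m\le j)\I(m\le k)$ with $\varpi_{-1}:=0$; interchanging the order of summation turns the double sum into $\sum_{m\ge0}(\varpi_m-\varpi_{m-1})\E[(\sum_{j\ge m}\Delta_j)^2]$. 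Telescoping gives $\sum_{j\ge m}\Delta_j=-(\zeta_{m-1}-\l)$ for $m\ge1$ (and $0$ for $m=0$, which annihilates that term), so the expression reduces to $\sum_{m\ge1}(\varpi_m-\varpi_{m-1})\E[(\zeta_{m-1}-\l)^2]$, which a single Abel summation rewrites as \eqref{eq:variance}. Pleasingly, the cross terms $\E[(\zeta_k-\l)(\zeta_{k-1}-\l)]$ cancel automatically along this route.

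Finiteness is then immediate: $P(T\ge k)$ is nonincreasing, so $\varpi_k$ is nondecreasing and $0\le\varpi_m-\varpi_{m-1}\le\varpi_m$, whence $\sum_{m\ge1}(\varpi_m-\varpi_{m-1})\E[(\zeta_{m-1}-\l)^2]\le\sum_{m\ge1}\varpi_m\E[(\zeta_{m-1}-\l)^2]<\infty$ by \eqref{eq:condition 1}. I expect the main obstacle to be the rigorous justification of the interchanges of expectation and infinite summation, not the algebra. Concretely, I would first show that the defining series for $\wh\l-\l$ converges in $L^2$, so that its second moment is the limit of the second moments of the partial sums and the continuity of the $L^2$ inner product legitimizes the rearrangement, and likewise that $\sum_{j\ge m}\Delta_j\to-(\zeta_{m-1}-\l)$ in $L^2$. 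The delicate point is that this is $L^2$/dominated convergence rather than absolute summability of $\sum_{j,k}|\E[\Delta_j\Delta_k]|$; condition \eqref{eq:condition 1} is precisely what dominates the partial sums and lets the limits pass through, and it is the estimate I would rely on throughout.
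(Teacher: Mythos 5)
\paradot{Review}
First, a point of fact: the paper contains no proof of this proposition. It is imported verbatim as Theorem~1 of \cite{Rhee:2013}, and the paper's Appendix proves only Theorem~\ref{the: theorem 2} and Theorem~\ref{th: my thm}. So the only meaningful comparison is with the original Rhee--Glynn argument, and your proposal is, in structure, exactly that argument: represent $\wh\l-\l$ as the randomly truncated series $\sum_{k\ge0}\varpi_k\Delta_k\I(T\ge k)$, use independence of $T$ to collapse the weights via $\varpi_j\varpi_k P(T\ge\max(j,k))=\varpi_{\min(j,k)}$, resum $\varpi_{\min(j,k)}=\sum_{m}(\varpi_m-\varpi_{m-1})\I(m\le j)\I(m\le k)$, telescope, and finish with Abel summation. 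The algebra is correct and does reproduce \eqref{eq:variance}: indeed $\sum_{k\ge1}\varpi_k\big(\E((\zeta_{k-1}-\l)^2)-\E((\zeta_k-\l)^2)\big)-\E((\zeta_0-\l)^2)=\sum_{m\ge1}(\varpi_m-\varpi_{m-1})\E((\zeta_{m-1}-\l)^2)$, and your monotonicity bound then gives finiteness from \eqref{eq:condition 1}.

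Your diagnosis of where the real work lies is also correct, and is worth making fully explicit, since the double series $\sum_{j,k}\varpi_{\min(j,k)}\E(\Delta_j\Delta_k)$ is not absolutely convergent in general: everything should be run through the $L^2$ limits of the finite partial sums $S_n=\sum_{k=0}^n\varpi_k\Delta_k\I(T\ge k)$, for which the resummation is exact finite algebra, $\E(S_n^2)=\E((\zeta_n-\l)^2)+\sum_{l=1}^n(\varpi_l-\varpi_{l-1})\E((\zeta_n-\zeta_{l-1})^2)$. Two concrete points then close the argument. First, both the Abel-summation step and the limit $n\to\infty$ require $\varpi_n\E((\zeta_n-\l)^2)\to0$; this follows from \eqref{eq:condition 1} because $\varpi_n\le\varpi_{n+1}$, so $\varpi_n\E((\zeta_n-\l)^2)\le\varpi_{n+1}\E((\zeta_n-\l)^2)$ is a term of a convergent series, but it needs to be said. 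Second, ``the cross terms cancel automatically'' is only true in the limit: at finite $n$ the cross terms $\E((\zeta_n-\l)(\zeta_{l-1}-\l))$ are present, and they are killed by two applications of Cauchy--Schwarz (once inside each expectation, once across the sum over $l$), giving a bound of the form $2\big(\varpi_n\E((\zeta_n-\l)^2)\big)^{1/2}\big(\sum_{l\ge1}\varpi_l\E((\zeta_{l-1}-\l)^2)\big)^{1/2}\to0$, again by \eqref{eq:condition 1}. With these two estimates spelled out, your proof is complete and coincides with the proof in the cited source.
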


\section{Exact ABC}

\subsection{Constructing an unbiased estimator of the likelihood}\label{subsec:constructing}

Let $\epsilon_k,\ k=0,1,...$ be a sequence of monotonically decreasing positive numbers
and $n_k$ a sequence of monotonically increasing positive integers  such that  $\epsilon_k\to0$ and $n_k \to \infty$ as  $k\to\infty$.
We estimate the ABC likelihood $p_{\text{ABC},\epsilon_k}(s_\obs|\t)$ based on the $n_k$ pseudo-datasets $s_i^k\sim p(\cdot|\t),\ i = 1,...,n_k, $ as
\begin{align}\label{eq:zeta_k}
\zeta_k& :=\wh p_{\text{ABC},\epsilon_k}(s_\obs|\t)=\frac{1}{n_k}\sum_{i=1}^{n_k}K_{\epsilon_k}(s_i^k-s_\obs).
\end{align}
It is important to note that the pseudo datasets $s_i^k$, $i=1,...,n_k$, can be re-used to compute $\zeta_j$ with $j>k$ as
it is unnecessary that the $\zeta_k$'s in Proposition~\ref{the: theorem 1} are  independent.
This significantly reduces the computational cost when it is expensive to generate these pseudo-datasets from $s\sim p(\cdot|\t)$.

\begin{theorem}\label{the: theorem 2}
Let $K(\cdot)$ be a $d$-multivariate kernel density, i.e.
$K(x)\geq 0,\;\;\int K(x)dx=1$. We assume that
\begin{align} \label{eq: K properties}
\int xK(x)\dd x & =0,\;\;\sigma^2_K:=\int x'xK(x)\dd x < \infty ,\;\; \sigma^2_R := \int K^2(x)\dd x < \infty ,\;\; \int x'xK^2(x)\dd x < \infty .
\end{align}
Let $T$ be a non-negative integer-valued random variable that is independent of the $\zeta_k, k \geq 0 ,$ and
such that $P(T\geq k)>0$ for any $k \geq 0 $. Let $\varpi_k=1/P(T\geq k)$.
Suppose that $p(s|\theta)$ is twice differentiable in $s$ for every $\t$,
and that
\begin{align}\label{eq:condition 2}
\sum_{k=1}^\infty \varpi_k\left (\epsilon_{k-1}^4+\frac{1}{n_{k-1}\epsilon_{k-1}^d}\right )& <\infty.
\end{align}
Then,
\beqn
{\wh p}(s_\obs|\theta) := \zeta_0+\sum_{k=1}^T \varpi_k(\zeta_{k}-\zeta_{k-1}),
\eeqn
is an unbiased estimator of $p(s_\obs|\theta)$
and has a finite variance.
\end{theorem}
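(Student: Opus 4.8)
The plan is to recognize $\zeta_k$ in \eqref{eq:zeta_k} as an ordinary kernel density estimator of the target $\l:=p(s_\obs|\t)$ and then to verify hypothesis \eqref{eq:condition 1} of Proposition~\ref{the: theorem 1}, which applies because the $\zeta_k$ are independent of $T$ by assumption. Once \eqref{eq:condition 1} is checked, Proposition~\ref{the: theorem 1} delivers both unbiasedness and finiteness of the variance at once, so the whole argument reduces to bounding $\E\big((\zeta_{k-1}-\l)^2\big)$ and showing it is controlled by the summand $\epsilon_{k-1}^4+1/(n_{k-1}\epsilon_{k-1}^d)$ appearing in \eqref{eq:condition 2}.

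To do this I would use the standard bias--variance decomposition
\beqn
\E\big((\zeta_k-\l)^2\big)=\big(\E(\zeta_k)-\l\big)^2+\V(\zeta_k),
\eeqn
valid because the $s_i^k$ are i.i.d.\ draws from $p(\cdot|\t)$. For the bias, $\E(\zeta_k)=p_{\ABC,\epsilon_k}(s_\obs|\t)=\int K_{\epsilon_k}(s-s_\obs)p(s|\t)\dd s$; after the substitution $u=(s-s_\obs)/\epsilon_k$ and a second-order Taylor expansion of $p(s_\obs+\epsilon_k u|\t)$ in its first argument, the first-order term vanishes because $\int uK(u)\dd u=0$, while the quadratic term is controlled by $\sigma^2_K<\infty$. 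This yields a bias of order $\epsilon_k^2$, hence a squared bias of order $\epsilon_k^4$. For the variance I would write $\V(\zeta_k)=n_k^{-1}\V\big(K_{\epsilon_k}(s_1^k-s_\obs)\big)$ and evaluate $\E\big(K_{\epsilon_k}(s_1^k-s_\obs)^2\big)=\epsilon_k^{-d}\int K^2(u)p(s_\obs+\epsilon_k u|\t)\dd u$, whose leading term is $\epsilon_k^{-d}\sigma^2_R\,p(s_\obs|\t)$ by $\sigma^2_R<\infty$ (the squared mean being lower order); thus $\V(\zeta_k)$ is of order $1/(n_k\epsilon_k^d)$.

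Combining the two bounds gives $\E\big((\zeta_{k-1}-\l)^2\big)\le C\big(\epsilon_{k-1}^4+1/(n_{k-1}\epsilon_{k-1}^d)\big)$ for a constant $C$ depending only on $s_\obs$, $\t$ and the kernel moments, so that \eqref{eq:condition 2} forces \eqref{eq:condition 1} and Proposition~\ref{the: theorem 1} finishes the proof. The main obstacle I anticipate is making these orders fully rigorous and, crucially, uniform in $k$: the Taylor remainders must be dominated so that $C$ can be taken independent of the vanishing bandwidth. This is exactly where the two integrability conditions $\int x'xK(x)\dd x<\infty$ and $\int x'xK^2(x)\dd x<\infty$ enter --- the former to bound the quadratic Taylor term governing the squared bias and the latter to control the corresponding remainder in the second moment $\E\big(K_{\epsilon_k}(s_1^k-s_\obs)^2\big)$ --- together with twice differentiability of $p(s|\t)$, which I would supplement with a local boundedness argument for the Hessian near $s_\obs$ to turn the expansions into genuine uniform bounds.
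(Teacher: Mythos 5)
Your proposal is correct and follows essentially the same route as the paper's proof: a bias--variance decomposition of $\E\big((\zeta_k-\l)^2\big)$, a second-order Taylor expansion (with the first-order term killed by $\int xK(x)\dd x=0$) giving squared bias $O(\epsilon_k^4)$, the standard kernel-estimator variance bound $O\big(1/(n_k\epsilon_k^d)\big)$, and then an appeal to Proposition~\ref{the: theorem 1}. Your closing remark about needing uniform-in-$k$ control of the Taylor remainders is a fair point that the paper's own proof handles only implicitly through its $o(\cdot)$ notation, but it does not change the argument's structure.
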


We now use Theorem~\ref{the: theorem 2} to
construct an unbiased estimator $\wh p(s_\obs|\theta)$ of the posterior $p(s_\obs|\theta)$
by designing $T$, $\eps_k$ and $n_k$ to satisfy the conditions of Theorem~\ref{the: theorem 2}.
Let $T$ be a non-negative integer-valued random variable such that $\P(T=k):=\rho(1-\rho)^k$, $k=0,1,...$ for $0<\rho<1$.
This choice means that the closer $\rho$ is to $0$, the bigger the values that $T$ is likely to take.
Then $\varpi_k=1/P(T\geq k)=1/(1-\rho)^{k}$.
Let $\tau$ be a number such that $0<\tau<1$.
If we select
\beqn
\eps_k:=[\tau(1-\rho)]^{\frac{k+1}{4}} \quad \text{and} \quad \;n_k:=\left\lceil\frac{1}{[\tau(1-\rho)]^{(k+1)(1+d/4)}}\right\rceil,
\eeqn
then
\beqn
\sum_{k=1}^\infty \varpi_k\left (\epsilon_{k-1}^4+\frac{1}{n_{k-1}\epsilon_{k-1}^d}\right )<2\sum_{k=1}^\infty\tau^k<\infty.
\eeqn
That is, condition \eqref{eq:condition 2} is satisfied.

From \eqref{eq:variance} and \eqref{eq:mean squared error}, after some algebra, the variance $\V({\wh p}(s_\obs|\theta))$ is approximately
\beqn
\V({\wh p}(s_\obs|\theta))\approx(C_1+C_2)\big(1-\tau(1-\rho)\big)\frac{\tau}{1-\tau}-(C_1+C_2)\big(\tau(1-\rho)\big)^{1/4},
\eeqn
with $C_1$ and $C_2$ positive constants in the proof of Theorem \ref{the: theorem 2}.
The first term, which dominates the second term, is a monotonic increasing function of $\tau$ and $\rho$.
So the variance $\V({\wh p}(s_\obs|\theta))$ will be small if $\rho$ and $\tau$ are close to 0.
However, small $\rho$ and $\tau$ lead to a large $k$ and hence a large $n_k$, especially if  $d $ is large.
We can reduce the variance of the unbiased estimator ${\wh p}(s_\obs|\theta)$  by using $\overline {{\wh p}(s_\obs|\theta)}=({\wh p}(s_\obs|\theta)_1+\cdots + {\wh p}(s_\obs|\theta)_{n_\text{rep}})/n_\text{rep}$,
with the ${\wh p}(s_\obs|\theta)_i$ independent replications of ${\wh p}(s_\obs|\theta)$. Then $\E\big(\overline {{\wh p}(s_\obs|\theta)}\big)=p(s_\obs|\theta)$. This
approach to estimating $p(s_\obs|\theta) $ has the important advantage that it automatically gives an estimate of
$\V({\wh p}(s_\obs|\theta)) $ and hence $\V(\overline {{\wh p}(s_\obs|\theta)})=\V({\wh p}(s_\obs|\theta) /n_\text{rep}$, i.e.,
\begin{align*}
{\wh \V}({\wh p}(s_\obs|\theta))  & = \frac{\mathlarger \sum_{i=1}^{n_\text{rep}}\left ({\wh p}(s_\obs|\theta)_i - \overline {{\wh p}(s_\obs|\theta)})  \right )^2}{(n_\text{rep}-1)}
\quad \text{ and} \quad {\wh \V} (\overline {{\wh p}(s_\obs|\theta)}))=\frac{{\wh \V} ({\wh p}(s_\obs|\theta)}{n_\text{rep}}.
\end{align*}

\subsection{Exact ABC with $\IS^2$}
Define $\pi(\theta):=p(\t|s_\obs)$ and
let $\wh p(s_\obs|\t,u)$ be the unbiased estimator of $p(s_\obs|\t)$ obtained using the debiasing approach described in the previous section, and $u \in \mathcal{U}$ is the set of uniform random variables used to generate $T $ and $\zeta_0, \dots, \zeta_T$.
We denote by $p(u|\t,s_\obs)$ the density of $u$ and sometimes write $p(u|\t,s_\obs)$ as $p(u|\t)$ for notational simplicity.
If the unbiased estimator $\wh p(s_\obs|\theta,u)$ is non-negative almost surely for each $\theta$, then we could use the pseudo-marginal Metropolish-Hastings (PMMH) algorithm \citep{Andrieu:2009}
to sample  from the posterior $p(\theta|s_\obs)$.
In general, however, the debiased estimator $\wh p(s_\obs|\theta,u)$ can be negative,
so it is mathematically invalid to use PMMH in our situation.
See \cite{Jacob:2015} for a detailed discussion.

Suppose that we wish to estimate the expectation of the function $\varphi(\theta)$ on $\Theta$ with respect to the posterior distribution, i.e.,
\begin{align*}
\E_\pi(\varphi) & =\int_\Theta \varphi(\theta)\pi(\theta)\dd\theta
= \frac{ \int_\Theta  \varphi(\theta)p(s_\obs|\theta)p(\theta) \dd \theta} { \int_\Theta p(s_\obs|\theta)p(\theta) \dd \theta}.
\end{align*}
 Then,
\begin{align*}
\E_\pi(\varphi) & = \frac{ \int_\Theta\int_\mathcal{U}  \varphi(\theta){\wh p}(s_\obs|\theta, u) p(\theta)p(u|\theta,s_\obs) \dd \theta\dd u }
{ \int_\Theta\int_\mathcal{U} {\wh p}(s_\obs|\theta, u)p(\theta)p(u|\theta,s_\obs) \dd \theta\dd u }.
\end{align*}
Let $g_\IS(\t)$ be an importance density on $\Theta$.
For a function $h(\theta)$ of $\theta \in \Theta$, define
\begin{align*}
I(h) & := \int_\Theta  h(\theta)  p(s_\obs|\theta)p(\theta) \dd \theta
 = \int_\Theta \int_\mathcal{U}  h(\theta) \frac{{\wh p}(s_\obs|\theta, u)p(\theta)}{g_{\IS}(\theta)} g_{\IS}(\theta) p(u|\theta,s_\obs) \dd \theta \dd u
\end{align*}
which is unbiasedly estimated by
\begin{align}
{\wh I}(h)  & := \frac1M \sum_{i=1}^M h(\theta_i) {\wh w} (\theta_i,u_i), \notag \\
\intertext{where}
\theta_i \sim g_{\IS}(\cdot),\ u_i & \sim p(\cdot |\theta_i, s_\obs) \quad \text{and} \quad {\wh w} (\theta_i,u_i) := \frac{{\wh p}(s_\obs|\theta,u_i) p(\theta_i)}{g_{\IS}(\theta_i) }. \label{eq:weight}
\end{align}
We now define the estimate of $\E_\pi(\varphi)$ as
\begin{align}
\widehat{ \E_\pi(\varphi)}& := \frac{{\wh I}(\varphi) }{{\wh I}(1)  }  \label{eq:EABC-IS2}.
\end{align}
In this form, the estimator $\widehat{ \E_\pi(\varphi)}$ is similar to the $\IS^2$ estimator introduced in \cite{Tran:2013},
who propose an importance sampling procedure when the likelihood is intractable but a non-negative unbiased estimator of the likelihood is available.

We now summarize the algorithm for estimating $\E_\pi(\varphi)$,
and refer to it as the Exact ABC algorithm based on an $\IS^2$ approach, or \EABCIS for short.
\begin{algorithm}[\EABCIS algorithm] For $i=1,...,M$
\begin{itemize}
\item Generate $\theta_i\sim g_\IS(\cdot)$, $u_i\sim p(\cdot|\theta_i,s_\obs)$  and compute $\wh p(s_\obs|\t_i,u_i)$.
\item Compute the weights $\wh w(\theta_i,u_i)$ as in \eqref{eq:weight}.
\end{itemize}
Compute the \EABCIS estimator $\wh{\E_\pi(\varphi)}$ of $\E_\pi(\varphi)$ as in \eqref{eq:EABC-IS2}.
\end{algorithm}

\begin{remark}\label{rem: remark many}
As with all importance sampling, it is straightforward to estimate several expectations simultaneously at almost the same cost as one expectation, because the weights
$\wh w(\t_i,u_i)$ are the same.
\end{remark}

To obtain a strong law of large numbers and a central limit theorem for $\widehat{ \E_\pi(\varphi)}$ we define $\xi (\theta,u) := {\wh p}(s_\obs|\theta,u)/p(s_\obs|\theta)$, so that $\E_{u\sim p(\cdot|\theta)}( \xi (\theta,u))=1$.

\begin{theorem} \label{th: my thm}
Suppose that $\mathrm{Sup}(\pi) \subseteq \mathrm{Sup}(g_{\IS})$, where $\mathrm{Sup}$ means support.
\begin{enumerate}
\item [(i)]
If $\E_\pi(|\varphi(\theta)|)  < \infty$, then $\widehat{ \E_\pi(\varphi)} \to \E_\pi(\varphi)$ almost surely as $M \to \infty$.
\item [(ii)]
If $\E_{g_{\IS}} \left ( \frac{\E_{u\sim p(\cdot|\theta)}(\xi^2(\theta,u))  \varphi(\theta)^2\pi(\theta)^2}{g_{\IS}^2(\theta)} \right )  < \infty$
then $\sqrt M \left ( \widehat{ \E_\pi(\varphi)} - \E_\pi(\varphi)\right ) \to \N(0, \sigma_\varphi^2)$
as $M \to \infty$, where
\begin{align}  \label{eq: clt}
\sigma_\varphi^2:= \E_{ g_{\IS}}\left ( \frac{ \pi^2(\theta)}{g_{\IS}^2(\theta)} (\varphi(\theta)-\E_\pi(\varphi))^2 \E_{u\sim p(\cdot|\theta)}(\xi^2(\theta,u)) \right ).
\end{align}
If  we can evaluate $p(s_\obs|\theta)$ so that $\xi = 1$, then $\sigma_\varphi^2 =\E_{ g_{\IS}}\left ( \frac{ \pi^2(\theta)}{g_{\IS}^2(\theta)} (\varphi(\theta)-\E_\pi(\varphi))^2 \right )$ is the variance of the noiseless importance sampler.
\item [(iii)] ${\wh {\sigma_\varphi^2}}$ is a consistent estimator  of $\sigma_\varphi^2$, where
\begin{align}
{\wh {\sigma_\varphi^2}} & := \frac{1}{M{\wh p}(s_\obs)^2}
 \sum_{i=1}^M \big(\varphi(\theta_i)- \wh{\E_\pi (\varphi)}\big)^2 {\wh w}^2(\theta_i,u_i),  \notag
\intertext{and}
{\wh p}(s_\obs)&  := \frac1M \sum_{i=1}^M {\wh w} (\theta_i,u_i).  \label{eq: marg likelihood}
\end{align}
\end{enumerate}
\end{theorem}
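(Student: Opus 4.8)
The plan is to exploit that the draws $(\theta_i,u_i)$, $i=1,\dots,M$, are i.i.d., with $\theta_i\sim g_\IS$ and $u_i\mid\theta_i\sim p(\cdot|\theta_i,s_\obs)$, and to treat $\wh{\E_\pi(\varphi)}=\wh I(\varphi)/\wh I(1)$ as a ratio of two sample averages. The first step is to identify the limits of these averages by the law of iterated expectations, conditioning on $\theta$ and taking the inner expectation over $u$: since $\E_{u|\theta}(\wh p(s_\obs|\theta,u))=p(s_\obs|\theta)$ (equivalently $\E_{u|\theta}(\xi(\theta,u))=1$), where $\E_{u|\theta}$ denotes $\E_{u\sim p(\cdot|\theta)}$, one obtains $\E(\wh w(\theta_i,u_i))=\int_\Theta p(s_\obs|\theta)p(\theta)\dd\theta=p(s_\obs)=I(1)$ and $\E(\varphi(\theta_i)\wh w(\theta_i,u_i))=I(\varphi)$. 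I would stress here that, although $\wh p$ may be negative, unbiasedness alone suffices to pin down these limits.

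For part (i) I would apply Kolmogorov's strong law to the numerator $\wh I(\varphi)=M^{-1}\sum_i\varphi(\theta_i)\wh w(\theta_i,u_i)$ and to the denominator $\wh I(1)=M^{-1}\sum_i\wh w(\theta_i,u_i)$ separately, giving $\wh I(\varphi)\to I(\varphi)$ and $\wh I(1)\to I(1)=p(s_\obs)>0$ almost surely; the continuous mapping theorem applied to the ratio then yields $\wh{\E_\pi(\varphi)}\to I(\varphi)/I(1)=\E_\pi(\varphi)$ a.s. The integrability needed for the strong law follows from $\E_\pi(|\varphi|)<\infty$, the support condition $\mathrm{Sup}(\pi)\subseteq\mathrm{Sup}(g_\IS)$, and the finiteness of the relevant absolute moments of $\wh p$ guaranteed under the conditions of Theorem~\ref{the: theorem 2}.

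For part (ii) the key is the self-normalised decomposition
\begin{align*}
\sqrt M\left(\wh{\E_\pi(\varphi)}-\E_\pi(\varphi)\right)=\frac{1}{\wh I(1)}\cdot\frac{1}{\sqrt M}\sum_{i=1}^M Z_i,\qquad Z_i:=\big(\varphi(\theta_i)-\E_\pi(\varphi)\big)\wh w(\theta_i,u_i),
\end{align*}
for which $\E(Z_i)=I(\varphi)-\E_\pi(\varphi)I(1)=0$ by the first step. A direct computation of $\E(Z_i^2)$, again conditioning on $\theta$ and using $\E_{u|\theta}(\wh p^2)=p(s_\obs|\theta)^2\E_{u|\theta}(\xi^2)$ together with the Bayes identity $p(s_\obs|\theta)p(\theta)=p(s_\obs)\pi(\theta)$, gives $\E(Z_i^2)=p(s_\obs)^2\sigma_\varphi^2$ with $\sigma_\varphi^2$ as in \eqref{eq: clt}; the stated second-moment condition is exactly what makes $\sigma_\varphi^2<\infty$. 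The classical i.i.d.\ central limit theorem then gives $M^{-1/2}\sum_i Z_i\to\N(0,p(s_\obs)^2\sigma_\varphi^2)$, and combining this with $\wh I(1)\to p(s_\obs)$ from part (i) via Slutsky's theorem yields $\sqrt M(\wh{\E_\pi(\varphi)}-\E_\pi(\varphi))\to\N(0,\sigma_\varphi^2)$.

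For part (iii) I would expand the quadratic inside $\wh{\sigma_\varphi^2}$ around the true centring,
\begin{align*}
\big(\varphi(\theta_i)-\wh{\E_\pi(\varphi)}\big)^2
&=\big(\varphi(\theta_i)-\E_\pi(\varphi)\big)^2
+2\big(\E_\pi(\varphi)-\wh{\E_\pi(\varphi)}\big)\big(\varphi(\theta_i)-\E_\pi(\varphi)\big)\\
&\quad+\big(\E_\pi(\varphi)-\wh{\E_\pi(\varphi)}\big)^2,
\end{align*}
so that the leading average $M^{-1}\sum_i(\varphi(\theta_i)-\E_\pi(\varphi))^2\wh w^2(\theta_i,u_i)$ converges a.s.\ to $\E(Z_i^2)=p(s_\obs)^2\sigma_\varphi^2$ by the strong law, while the two remaining averages converge to finite limits and are multiplied by $\E_\pi(\varphi)-\wh{\E_\pi(\varphi)}\to0$ from part (i) and are hence asymptotically negligible; dividing by $\wh p(s_\obs)^2=\wh I(1)^2\to p(s_\obs)^2$ gives $\wh{\sigma_\varphi^2}\to\sigma_\varphi^2$ a.s. I expect the main obstacle throughout to be the integrability bookkeeping: because $\wh p$ can take negative values, none of the steps may appeal to monotonicity or nonnegativity, and every almost-sure limit and second moment must be justified by verifying the relevant first- and second-order finiteness conditions (controlling the cross terms by the Cauchy--Schwarz inequality) using the assumptions of the theorem and the support condition.
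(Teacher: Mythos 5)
Your proposal is correct and follows essentially the same route as the paper: the same identification of $\E(\wh w)=p(s_\obs)$ and $\E(\varphi\,\wh w)=p(s_\obs)\E_\pi(\varphi)$ via iterated expectations, the SLLN-plus-ratio argument for (i), and the identical decomposition $\sqrt M(\wh{\E_\pi(\varphi)}-\E_\pi(\varphi))=M^{-1/2}\sum_i Z_i/\wh p(s_\obs)$ with the conditional second-moment computation, i.i.d.\ CLT and Slutsky for (ii). For (iii) you actually supply more detail than the paper, which only states the sufficient convergence; your expansion around the true centring $\E_\pi(\varphi)$ with the cross terms killed by part (i) is the natural way to complete that step.
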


\begin{remark}
We note that ${\wh p}(s_\obs)$ in \eqref{eq: marg likelihood}
is an estimate of the marginal likelihood $p(s_\obs)$, which can be used for model comparison. It is straightforward to obtain
this marginal likelihood estimate and an estimate of its standard error and
we can readily show that ${\wh p}(s_\obs)$
converges to $p(s_\obs)$ as $M \rightarrow \infty$. It is usually difficult to accurately
estimate the marginal likelihood and its standard error using competing ABC approaches.
\end{remark}
\section{Examples}

\subsection{A Gaussian example}
This  example is discussed by \cite{Sisson:2011} who consider a univariate Gaussian model $y\sim \N(\theta,1)$, with
 $y_\obs=0$ and  a diffuse prior $p(\theta)\propto 1$. Here, the posterior is $\pi(\theta)=p(\theta|y_\obs)=\N(0,1)$ and
 the summary statistics $s=S(y)=y$ is sufficient.
We are interested in estimating the posterior  noncentral second moment of $\theta$,
\begin{align*}
\E(\theta^2|y_\obs) = \int \theta^2p(\theta|y_\obs)\dd\theta =1.
\end{align*}
We take the kernel $K(\cdot)$  as the standard normal density, so
the ABC likelihood $p_{\ABC,\eps}(y_\obs|\t)$ in \eqref{eq:ABC likelihood 2} can be computed analytically,
and the ABC posterior is $p_{\ABC,\eps}(\theta|y_\obs)\propto p(\t)p_{\ABC,\eps}(y_\obs|\t)=\N(0,1+\eps^2)$.
So setting aside the Monte Carlo error, standard ABC procedures estimate $\E(\theta^2|y_\obs)$ by $1+\eps^2$,
which always suffers from a systematic error whenever $\eps>0$.

\begin{table}[h]
\centering
\vskip2mm
{\small
\begin{tabular}{c|cccc}
\hline
$M$  		&1000		&10,000		&100,000 & 1,000,000\\
\hline
\EABCIS estimate&1.0065 (0.0733)&1.0044 (.0245)	&1.0008 (0.0111)&1.0000 (0.0002)\\
\hline
\end{tabular}
}
\caption{\EABCIS estimates of $\E(\t^2|y_\obs)=1$ for various numbers of samples $M$. The numbers in brackets are standard errors} \label{tab:toy example}
\end{table}
To run the \EABCIS algorithm, we
estimate the likelihood $p(y_\obs|\t)$ unbiasedly
using the debiasing approach in Section \ref{subsec:constructing}
with $\rho=0.4$, $\tau=0.2$
and the importance density $g_\IS(\t)=\N(0,2)$.
The number of replications $n_\text{rep}$ is selected such that the variance
$\V(\log|\wh p(y_\obs|\bar\t)|)\approx 1$ with $\bar\t=0.5$.
This is motivated by the $\IS^2$ theory in \cite{Tran:2013}
who show that the optimal variance of the log-likelihood estimators is 1
in order to minimize the overall computational cost.

Table \ref{tab:toy example} shows the \EABCIS estimates of $\E(\theta^2|y_\obs)$ for various numbers of samples $M$.
The results suggest empirically that the estimates consistently get closer to the true value
as $M$ increases.
This attractive property of the \EABCIS is contrasted with other ABC algorithms
where a systematic error always exists no matter how large $M$ is.

\subsection{Ising model}
Our second example is the Ising model on a rectangular lattice of size $L\times W$
with data $y_{i,j}\in\{-1,1\}$ and likelihood
\beqn
p(y|\theta )=\frac{\exp(\theta S(y))}{C(\t)},
\eeqn
where $S(y)=\sum_{i=1}^{L-1}\sum_{j=1}^W y_{i,j}y_{i+1,j}+\sum_{i=1}^{L}\sum_{j=1}^{W-1} y_{i,j}y_{i,j+1}$;
see \cite{Moller:2006}.
The likelihood $p(y|\theta)$ has $S(y)$ as sufficient statistic and is considered intractable as computing the normalising constant $C(\theta)$ for each $\theta$
is infeasible
for large lattices.
However, one can generate data $y$ from the Ising model $y\sim p(\cdot|\theta)$
using, for example, perfect simulation or Monte Carlo simulation.
We note that $S(y)$ is a sufficient statistic for $\t$.

In this example, we set $L=W=50$ and generate a data set $y_\obs$ using $\theta=0.5$.
Our task is to estimate the posterior mean of $\t$, given $y_\obs$.
As in \cite{Moller:2006}, we use a uniform prior $U(0,1)$ for $\t$.
For this Ising model, an exact MCMC is available
for sampling from the posterior $p(\t|y_\obs)$ \citep{Moller:2006},
which we use as the \lq\lq gold standard\rq\rq{} for comparison.
We run this exact MCMC algorithm for 1,000,000 iterations and obtain an estimate of 0.5099.
for the posterior mean $\E(\theta|y_\obs)=\int \t p(\theta|y_\obs)\dd\theta$.
The number in brackets is the standard deviation.

The \EABCIS estimate of $\E(\t|y_\obs)$, based on $M=200,000$ samples of $\t$, is 0.5099 (0.0001)
which is equal to (up to 4 decimal places)  the estimate given by the exact MCMC algorithm.

We now use PMMH to sample from the ABC posterior $p_{\ABC,\eps}(\t|y_\obs)$, for various $\eps=10,\ 1$ and 0.1,
with the ABC likelihood $p_{\ABC,\eps}(y_\obs|\t)$ in \eqref{eq:ABC likelihood 1}
estimated unbiasedly by
\beqn
\wh p_{\text{ABC},\epsilon}(s_\obs|\t)=\frac{1}{n}\sum_{i=1}^{n}K_{\epsilon}(s_i-s_\obs),\;\;s_i\sim p(\cdot|\t).
\eeqn
For each $\eps$, the number of pseudo datasets $n$ is tailored such that the acceptance rate is about 0.23.
The ABC-PMMH estimates of the posterior mean $\E_{\theta\sim p_{\ABC,\eps}(\t|y_\obs)  }(\t|y_\obs)$, based on 100,000 iterations, are 0.5094 (0.0002), 0.5108 (0.0003) and 0.5100 (0.0003) respectively.
These estimates get closer to the \lq \lq gold standar\lq \lq  estimate 0.5099 when $\eps$ decreases.
Note that the smaller the value of $\eps$, the greater the computational cost
as we need a bigger $n$ in order for the Markov chain to mix well.

\section{Discussion}
Our article presents the \EABCIS approach for estimating expectations with respect to the exact posterior distribution conditional on the observed summary statistic.
The \EABCIS estimators do not suffer from a systematic error inherent in standard ABC algorithms due to the use of tolerance $\eps>0$.
Our approach generalises directly to other applications of importance sampling where the likelihood is intractable but
an unbiased estimator of the likelihood can be used.
\section*{Appendix: Proofs}
\begin{proof}[Proof of Theorem~\ref{the: theorem 2}] For a fixed $\theta$, let $\lambda=p(s_\obs|\t)$.
We first show that
\begin{align} \label{eq: prel thm 2}
\Big(p_{\text{ABC},\eps_k}(y|\t)-\l\Big)^2=\frac14\eps_k^4\s_K^4\Big(\tr(\nabla^2p(s_\obs|\t))\Big)^2+o(\eps_k^4).
\end{align}
\begin{align*}
p_{\text{ABC},\eps_k}(s_\obs|\t)&=\frac{1}{\eps_k^d}\int K(\frac{s-s_\obs}{\eps_k})p(s|\t)\dd s\\
&=\int K(w)p(s_\obs+\eps_k w|\t)\dd w,\;\;\mathrm{where} \quad w:=\frac{s-s_\obs}{\eps_k}\\
&=\int K(w)\Big(p(s_\obs|\t)+\eps_k w'\nabla p(s_\obs|\t)+\frac12\eps_k^2w'\nabla^2p(s_\obs|\t)w+o(\eps_k^2)\Big)\dd w\\
&=p(s_\obs|\t)+\frac12\eps_k^2\s_K^2\tr(\nabla^2p(s_\obs|\t))+o(\eps_k^2),
\end{align*}
which gives \eqref{eq: prel thm 2}.
Similarly,
\beqn
\V(\zeta_k)=n_k^{-1}\eps_k^{-d}R_Kp(s_\obs|\t)+o(n_k^{-1}\eps_k^{-d}),
\eeqn
where $R_K=\int K(x)^2 \dd x$.

Then,
\bea\label{eq:mean squared error}
\E\left ((\zeta_k-\l)^2\right )&=&\V(\zeta_k)+\big(p_{\text{ABC},\epsilon_k}(y|\t)-\l\big)^2\notag\\
&=&C_1\eps_k^4+C_2n_k^{-1}\eps_k^{-d}+o(\eps_k^4+n_k^{-1}\eps_k^{-d}),
\eea
and \eqref{eq:condition 2} implies \eqref{eq:condition 1}.
The proof then follows from Proposition~\ref{the: theorem 1}
\end{proof}
\begin{proof} [Proof of Theorem~\ref{th: my thm} ]
The proof is similar to that of Theorem 1 in \cite{Tran:2013}.
Let ${\wt g}_{\IS}(\theta,u): = g_{\IS}(\theta)p(u|\theta,s_\obs)$ and ${\wt \pi} (\theta, u) := \pi(\theta) p(u|\theta,s_\obs)$.
The condition $\Sup(\pi)\subseteq\Sup(g_\text{IS})$ implies that $\Sup(\wt\pi)\subseteq\Sup(\wt g_\text{IS})$.
This, together with the existence and finiteness of $\E_\pi(\varphi)$ ensure that
\beqn
\E_{\wt g_\IS}[\varphi(\theta_i){\wh w}(\theta_i,u_i)]=p(s_\obs)\E_\pi(\varphi)\;\;\text{and}
\;\; \E_{\wt g_\IS}[\wh w(\theta_i,u_i)]=p(s_\obs)
\eeqn
exist and are finite.
Result (i) then follows immediately from \eqref{eq:EABC-IS2} and the strong law of large numbers.

To prove (ii), write
\begin{align*}
\wh{\E_{\pi}(\varphi)}-\E_\pi(\varphi) & = \frac{\frac1M \sum_{i=1}^M \big(\varphi(\t_i)-
\E_\pi(\varphi)\big) \wh w(\theta_i,u_i)} {\frac1{M}\sum_{i=1}^{M}\wh w(\theta_i, u_i) }
 = S_M /{\wh p(s_\obs)}, \\
\text{where} \; \;  S_M & = M^{-1} \sum_{i=1}^M X(\theta_i,u_i) ,   \quad \text{with} \quad
 X(\theta, u) =  \big(\varphi(\t_i)-
\E_\pi(\varphi)\big) \wh w(\theta_i,u_i)
\end{align*}
The $X_i:= X(\theta_i,u_i)$ are independently and identically distributed and it is straightforward to check that  $\E_{\wt g_\IS}(X)=0$.
\begin{align*}
\Var_{\wt g_\IS}(X)&= \E _{\wt  g_\IS}(X^2)\\
& = \E_{g_{\IS}} \left ( \E_{u \sim p(\cdot |\theta, s_\obs)} (X^2)\right ) \\
&  = \E _{  g_{\IS}}\left ( \Big ( \big(\varphi(\t_i)-
\E_\pi(\varphi)\big) \frac{p(\theta)p(s_\obs|\theta)}{g_{\IS}(\theta)}\Big)^2 \E_{u \sim p(\cdot |\theta, s_\obs)} (\xi^2)\right ) \\
& = p(s_\obs)^2 \E _{  g_{\IS}}\left ( \Big ( \big(\varphi(\t_i)-
\E_\pi(\varphi)\big) \frac{\pi(\theta)}{g_{\IS}(\theta)}\Big)^2 \E_{u \sim p(\cdot |\theta, s_\obs)} (\xi^2)\right ) \\
& = p(s_\obs)^2 \sigma^2_\varphi
\end{align*}
By the central limit theorem for a sum of independently and identically distributed random variables
with a finite second moment, $\sqrt{M}S_{M}\stackrel{d}{\to}\N(0, p(s_\obs)^2 \sigma^2_\varphi )$.
By (i) and Slutsky's theorem,
\beqn
\sqrt{M}\Big(\wh{\E_\pi(\varphi)}-\E_\pi(\varphi)\Big) =\frac{\sqrt{M}S_{M}}{ {\wh p}(s_\obs)} \stackrel{d}{\to}\N(0,\sigma^2_\varphi ) %
\eeqn
To prove (iii), it is sufficient to show that
\begin{align*}
\wh \sigma^2_\varphi & := \frac{1}{M\wh p(s_\obs)} \sum_{i=1}^M \big( \varphi(\theta_i) -\wh{\E_\pi( \varphi)}\big)^2 {\wh w}^2(\theta_i,u_i) \\
& \stackrel{a.s.}{\longrightarrow} \frac{\E_{\wt g_\IS}(X^2)}{p(s_\obs)^2} =\sigma^2_\varphi.
\end{align*}
\end{proof}

\bibliographystyle{apalike}
\bibliography{references}

\end{document}